\documentclass[11pt,a4paper]{article}

\usepackage{amsmath}
\usepackage{amsthm}
\usepackage{amssymb}
\usepackage{graphicx}
\usepackage{enumerate}
\usepackage{algorithm}
\usepackage{algpseudocode}
\usepackage{algorithmicx}

\usepackage[margin=1.1in]{geometry}
\usepackage[hyphens]{url}

\usepackage{amsthm}

\newtheorem{theorem}{Theorem}
\newtheorem{lemma}[theorem]{Lemma}

\begin{document}

\title{Computing the Lambert W function in arbitrary-precision complex interval arithmetic}
\author{Fredrik Johansson\footnote{LFANT project-team, INRIA Bordeaux-Sud-Ouest. Contact: \texttt{fredrik.johansson@gmail.com}. }}
\date{}
\maketitle

\begin{abstract}
We describe an algorithm to evaluate all the complex branches
of the Lambert~$W$~function
with rigorous error bounds in interval arithmetic,
which has been implemented
in the Arb library. The classic 1996 paper on the Lambert $W$ function
by Corless \emph{et al.} provides a thorough but partly heuristic
numerical analysis which needs to be complemented
with some explicit inequalities and practical observations about
managing precision and branch cuts.
\end{abstract}

\section{Introduction}

The Lambert $W$ function $W(z)$ is the inverse function of $f(w) = w e^w$,
meaning that $W(z) e^{W(z)} = z$ holds for any $z$.
Since $f$ is not injective, the Lambert~$W$ function is multivalued,
having an infinite number of branches $W_k(z)$, $k \in \mathbb{Z}$,
analogous to the branches $\ln_k(z) = \log(z) + 2 \pi i k$
of the natural logarithm which inverts
$g(w) = e^w$.

The study of the equation $w e^w = z$
goes back to Lambert and Euler in the 18th century,
but a standardized notation for the solution
only appeared in the 1990s with the introduction of \texttt{LambertW}
in the Maple computer algebra system, along with the paper~\cite{corless1996lambertw}
by Corless, Gonnet, Hare, Jeffrey and Knuth which collected and proved
the function's main properties.
There is now a vast literature on applications,
and in 2016 a conference was held to celebrate the first~20 years
of the Lambert $W$ function.

The paper~\cite{corless1996lambertw}
sketches how $W_k(z)$ can be computed for any
$z \in \mathbb{C}$ and any $k$, using a combination of series
expansions and iterative root-finding.
Numerical implementations
are available
in many computer algebra systems and numerical libraries; see
for instance \cite{lawrence2012algorithm,chapeau2002numerical,veberivc2012lambert}.
However, there is no published work to date
addressing interval arithmetic or
discussing a complete rigorous implementation of the complex branches.

The equation $w e^w - z = 0$ can naturally be solved with any standard
interval root-finding method like subdivision
or the interval Newton method~\cite{moore1979methods}.
Another possibility, suggested in~\cite{corless1996lambertw},
is to use a posteriori error analysis to bound the error of an approximate solution.
The Lambert~$W$ function can also be evaluated as the solution
of an ordinary differential equation, for which rigorous solvers are available.
Regardless of the approach, the main difficulty is to
make sure that correctness and efficiency are maintained near
singularities and branch cuts.

This paper describes an algorithm
for rigorous evaluation of the Lambert~$W$ function in complex interval
arithmetic, which has been
implemented in the Arb library~\cite{Johansson2017arb}.
This implementation was designed to achieve the following goals:
\begin{itemize}
\setlength{\itemsep}{3pt}
\setlength{\parskip}{0pt}
\setlength{\parsep}{0pt}
\item $W(z)$ is only a constant factor more expensive
to compute than elementary functions like $\log(z)$ or $\exp(z)$.
For rapid, rigorous computation of elementary functions in arbitrary precision,
the methods in~\cite{Johansson2015elementary} are used.
\item The output enclosures are reasonably tight.
\item All the complex branches $W_k$ are supported, with a stringent treatment of branch cuts.
\item It is possible to compute derivatives $W^{(n)}(z)$ efficiently, for arbitrary $n$.
\end{itemize}
The main contribution of this paper is to derive bounds with explicit constants
for a posteriori certification
and for the truncation error in certain series expansions, in
cases where previous publications give big-O estimates.
We also discuss the implementation of the complex branches in detail.

Arb uses (extended) real intervals of the form $[m \pm r]$, shorthand for
$[m-r, m+r]$, where the midpoint
$m$ is an arbitary-precision floating-point number and the radius
$r$ is an unsigned fixed-precision floating-point number.
The exponents of $m$ and $r$ are bignums which can be arbitrarily large
(this is useful for asymptotic problems, and removes edge cases
with underflow or overflow).
Complex numbers are represented in rectangular form $x+yi$ using
pairs of real intervals.
We will occasionally rely on these implementation details,
but generally speaking the methods
translate easily to other interval formats.

\subsection{Complex branches}

In this work, $W_k(z)$ always refers to the standard
$k$-th branch as defined in~\cite{corless1996lambertw}.
We sometimes write $W(z)$ when referring to
the multivalued Lambert $W$ function or a
branch implied by the context.
Before we proceed, we summarize the branch structure of $W$.
A more detailed description with illustrations can be found in~\cite{corless1996lambertw}.

\begin{figure}
\begin{centering}
\includegraphics[width=0.8\textwidth]{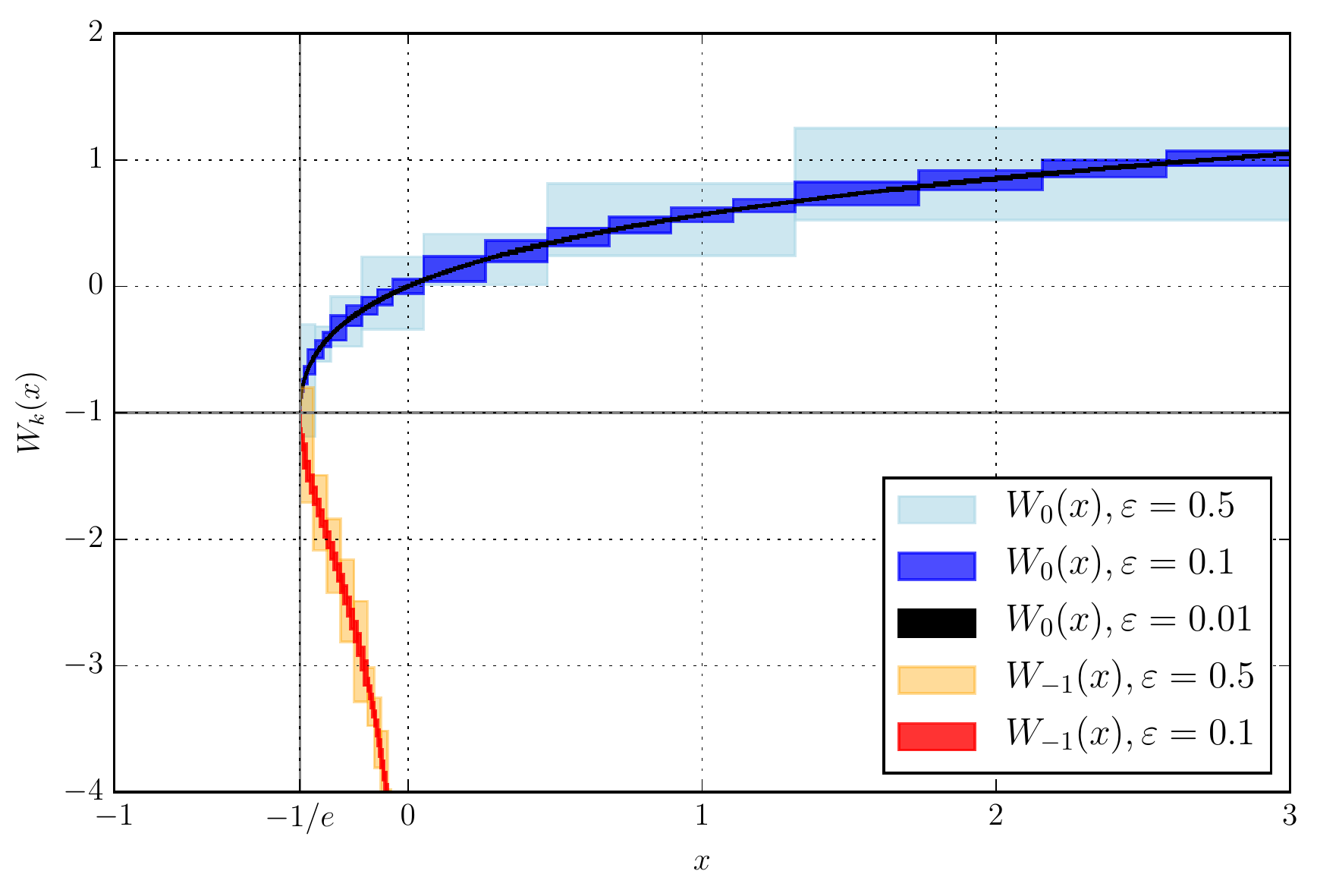}
\caption{
Plot of the real branches $W_0(x)$ and $W_{-1}(x)$ computed with Arb.
The boxes show the size of the output intervals given wide input intervals.
In this plot, the input intervals have been subdivided until the output radius is
smaller than $\varepsilon$.}
\label{fig:real}
\end{centering}
\end{figure}

Figure~\ref{fig:real} demonstrates evaluation
of the Lambert $W$ function in the two real-valued regions.
The \emph{principal branch} $W_0(z)$ is real-valued and monotone increasing for real $z \ge -1/e$,
with the image $[-1,\infty)$, while
$W_{-1}(z)$ is real-valued and monotone decreasing for real $-1/e \le z < 0$,
with the image $(-\infty,-1]$.
Everywhere else, $W_k(z)$ is complex.
There is a square root-type singularity
at the branch point $z = -1/e$ connecting the real segments,
where $W_0(-1/e) = W_{-1}(-1/e) = -1$.
The principal branch contains the root $W_0(0) = 0$,
which is the only root of $W$. For all $k \ne 0$, the point $z = 0$
is a branch point with a logarithmic singularity.

$W_0(z)$ has a single branch cut on $(-\infty,-1/e)$, while
the branches $W_k(z)$ with $|k| \ge 2$ have a single branch cut on $(-\infty,0)$.
The branches $W_{\pm 1}$ are more complicated, with a set of adjacent branch cuts:
in the upper half plane, $W_{-1}$ has a branch cut on $(-\infty,-1/e)$ and one on $(-1/e,0)$;
in the lower half plane, $W_{-1}$ has a single branch cut on $(-\infty,0)$.
$W_{1}$ is similar to $W_{-1}$, but with the sides exchanged.
The branch cuts on $(-\infty,0)$ or $(-\infty,-1/e)$ connect $W_k$ with $W_{k+1}$,
while the branch cuts on $(-1/e,0)$ connect $W_{-1}$ with $W_1$.

We follow the convention that the function value on a branch cut is
continuous when approaching the cut in the counterclockwise direction
around a branch point. For the standard branches $W_k(z)$, this is the
same as continuity with the upper half plane, i.e. $W_k(x+0i) = \lim_{y\to 0+} W_k(x + yi)$.
When $\operatorname{Im}(z) \ne 0$, we have $W_k(z) = \overline{W_{-k}(\overline{z})}$.
By the same convention, the principal branch of the natural logarithm
is defined to satisfy
$\operatorname{Im}(\log(z)) \in (-\pi, +\pi]$.

We do not use signed zero in the sense of
IEEE~754 floating-point arithmetic, which would allow preserving
continuity from either side of a branch cut.
This is a trivial omission since
we can distinguish between $W(x+0i)$ and
$W(x-0i)$ using $W_k(x-0i) = \overline{W_{-k}(x+0i)}$.

In interval arithmetic, we need to enclose the union of the images
of $W(z)$ on both sides of the cut when the interval representing
$z$ straddles a branch cut.
The jump discontinuity between the cuts will prevent the output intervals
from converging when the input intervals shrink
(unless the input intervals lie exactly on
a branch cut, say $z = [-5,-4] + 0i$).
This problem is solved by providing a set of
alternative branch cuts to complement the standard cuts,
as discussed in Section~\ref{sect:altbranch}.

\section{The main algorithm}

The algorithm to evaluate the Lambert $W$ function has three main ingredients:

\begin{itemize}
\item (Asymptotic cases.) If $|z|$ is extremely small or large, or if $z$ is extremely close to the branch point at $-1/e$ when $W(z) \approx -1$, use the respective Taylor, Puiseux or asymptotic series to compute $W(z)$ directly.
\item (Approximation.) Use floating-point arithmetic to compute some $\tilde w \approx W(\operatorname{mid}(z))$.
\item (Certification.) Given $\tilde w$, use interval arithmetic (or floating-point arithmetic with directed rounding) to determine a bound $r$ such that $|W(z)-\tilde w| \le r$, and return $\tilde w + [\pm r] + [\pm r]i$,
or simply $[\tilde w \pm r]$ when $W(z)$ is real-valued.
\end{itemize}

The special treatment of asymptotic cases is not necessary, but
improves performance
since the error can be bounded directly without a separate certification step.
We give error bounds for the truncated series expansions
in Section~\ref{sect:series}.

Computing a floating-point approximation with heuristic error control
is a well understood problem, and we avoid going into
too much detail here.
Essentially, Arb uses the Halley iteration
$$w_{j+1} = w_j - \frac{w_j e^{w_j} - \operatorname{mid}(z)}{e^{w_j+1} - \displaystyle\frac{(w_j+2)(w_j e^{w_j} - \operatorname{mid}(z))}{2 w_j + 2}}$$
suggested in~\cite{corless1996lambertw} to solve $w e^w - \operatorname{mid}(z) = 0$,
starting from a well-chosen initial value.
In the most common cases, machine \texttt{double} arithmetic is first
used to achieve near 53-bit accuracy (with care to avoid overflow
or underflow problems or loss of significance near $z = -1/e$).
For typical accuracy goals of less than a few hundred bits,
this leaves at most a couple of iterations to be done using arbitrary-precision arithmetic.

In the arbitrary-precision phase,
the working precision is initially set low and then increases
with each Halley iteration step to match the estimated number of accurate bits
(which roughly triples with each iteration). This ensures that
obtaining~$p$ accurate bits costs $O(1)$ full-precision exponential
function evaluations instead of $O(\log p)$.

\subsection{Certification}

To compute a certified error bound for $\tilde w$,
we use backward error analysis, following the suggestion of~\cite{corless1996lambertw}.
We compute $\tilde z = {\tilde w} e^{\tilde w}$ with interval arithmetic,
and use
\begin{equation}
\label{eq:erroridentity}
\tilde w = W(\tilde z) = W(z) + \int_z^{\tilde z} W'(t) dt.
\end{equation}
to bound the error $W_k(\tilde z) - W_k(z)$.
This approach relies on having a way to bound $|W_k'|$, which we address in Section~\ref{sect:series}.

The formal identity \eqref{eq:erroridentity}
is only valid provided that the correct integration
path is taken on the Riemann surface of the multivalued $W$ function.
During the certification, we verify that
the straight-line path $\gamma$ from $z$ to $\tilde z$
for $W_k$ is correct in \eqref{eq:erroridentity}, so that
the error is bounded by
$|z-\tilde z| \sup_{t \in \gamma} |W_k'(t)|$.
This is essentially to say that we have approximated $W_k(z)$
for the right~$k$, since a poor starting value (or rounding error)
in the Halley iteration could have put $\tilde w$ on the wrong branch,
or closer to a solution on the wrong branch than the intended solution.

\begin{algorithm}
\caption{Compute certified enclosure of $W_k(z)$. The input is a complex interval $z$, a branch index $k \in \mathbb{Z}$, and a complex floating-point number $\tilde w$.}
\begin{enumerate}
\setlength{\itemsep}{-3pt}
\setlength{\parskip}{3pt}
\setlength{\parsep}{-3pt}
\item Verify that $\tilde w = x+yi$ lies in the range of the branch $W_k$:
\begin{enumerate}
\setlength{\itemsep}{-3pt}
\setlength{\parskip}{3pt}
\setlength{\parsep}{-3pt}
\item Compute $t = x \operatorname{sinc}(y)$, $v = -\cos(y)$, $u = \operatorname{sgn}(k) y / \pi$ using interval arithmetic.
\item If $k = 0$, check $(|u| < 1) \land (t > v)$.
\item If $k \ne 0$, check $P_1 \land (P_2 \lor P_3 \lor P_4)$ where
\begin{align*}
P_1 &= (u > 2|k|-2) \land (u < 2|k|+1) \\
P_2 &= (u > 2|k|-1) \land (u < 2|k|) \\ 
P_3 &= (u < 2|k|) \land (t < v) \\
P_4 &= (u > 2|k|-1) \land (t > v).
\end{align*}
\item If the check fails, return $[\pm \infty] + [\pm \infty] i$.
\end{enumerate}
\item Compute $\tilde z = \tilde w e^{\tilde w}$ using interval arithmetic.
\item Compute a complex interval $U \supseteq z \cup \tilde z$ ($U$ will contain the straight line from $z$ to $\tilde z$).
\item Verify that $U$ does not cross a branch cut: check
\begin{equation*}
(\operatorname{Im}(U) \ge 0) \lor (\operatorname{Im}(U) < 0) \lor
\left(
\begin{matrix}
\operatorname{Re}(e U + 1) > 0 & \text{ if } k = 0 \\
\operatorname{Re}(U) > 0 & \text{ if } k \ne 0
\end{matrix}
\right).
\end{equation*}
If the check fails, return $[\pm \infty] + [\pm \infty] i$.
\item Compute a bound $C \ge |W_k'(U)|$ and return $\tilde w + [\pm r] + [\pm r] i$ where $r = C |z - \tilde z|$.
\end{enumerate}
\label{alg:certify}
\end{algorithm}

The complete certification procedure is stated in
Algorithm~\ref{alg:certify}.
In the pseudocode,
all pointwise predicates are extended to intervals in the strong sense;
for example, $x \ge 0$ evaluates to true if all points in the interval
representing $x$ are nonnegative,
and false otherwise. A predicate that should be true for exact input
in infinite precision arithmetic
can therefore evaluate to false due to interval overestimation or
insufficient precision.

In the first step, we use the fact that the images of the
branches in the complex $W$-plane are separated by the line $(-\infty,-1/e]$
together with the curves $\{-\eta \cot \eta + \eta i\}$
for $-\pi < \eta < \pi$ and $2 k \pi < \pm \eta < (2k+1) \pi$
(this is proved in~\cite{corless1996lambertw}).
In the $k \ne 0$ case, the predicates $P_2, P_3, P_4$ cover
overlapping regions, allowing the test to pass even if $\tilde w$
falls very close to one of the curves
with $2 k \pi < \pm \eta < (2k+1) \pi$ where a sign change occurs, i.e.
when $z$ crosses the real axis to the right of the branch point.

The test in Algorithm~\ref{alg:certify} always fails when
$z$ lies on a branch cut, or too close to a cut to resolve with
a reasonable precision, say if
$z = -2^{10^{10}} + 10i$ or $z = -10 + 2^{-10^{10}} i$.
This problem could be solved by
taking the location of $z$ into account in addition that of $\tilde w$.
In Arb, a different solution has been implemented,
namely to perturb $z$ away from the branch cut
before calling Algorithm~\ref{alg:certify} (together with an error bound
for this perturbation).
This works well in practice
with the use of a few guard bits, and seemed to require less extra logic to implement.

Due to the cancellation in evaluating the residual $z-\tilde z$, the quantity
$\tilde z = \tilde w e^{\tilde w}$
needs to be computed to at least $p$-bit precision in the certification step
to achieve a relative error bound of $2^{-p}$.
Here, a useful optimization is to compute $e^{w_j}$ with interval arithmetic
in the last Halley update $\tilde w = w_{j+1} = H(w_j)$
and then compute $e^{\tilde w}$ as $e^{w_j} e^{\tilde w - w_j}$.
Evaluating $e^{\tilde w - w_j}$ costs only a few series
terms of the exponential function since
$|\tilde w - w_j| \approx 2^{-p/3}$.

A different possibility for the certification step would be to
guess an interval around~$\tilde w$ and
perform one iteration with the interval Newton method.
This can be combined with the main iteration, simultaneously
extending the accuracy from $p/2$ to $p$ bits and certifying the error bound.
An advantage of the interval Newton method is that it operates directly
on the function $f(w) = w e^w - z$ and its derivative without requiring
explicit knowledge about $W'$.
This method was tested but ultimately abandoned in the Arb implementation
since it seemed more difficult to handle the precision and
make a good interval guess in practice, particularly
when $z$ is represented by a wide interval. In any case
the branch certification would still be necessary.

\subsection{The main algorithm in more detail}

Algorithm~\ref{alg:main} describes the main steps
implemented by the Arb function with signature

\begin{verbatim}
void acb_lambertw(acb_t res, const acb_t z,
    const fmpz_t k, int flags, slong prec)
\end{verbatim}

where \texttt{acb\_t} denotes Arb's complex interval type,
\texttt{res} is the output variable, \texttt{fmpz\_t}
is a multiprecision integer type, and
\texttt{prec} gives the precision goal $p$ in bits.

\begin{algorithm}[h!]
\caption{Main algorithm for $W_k(z)$ implemented in \texttt{acb\_lambertw}. The input is a complex interval $z$, a branch index $k \in \mathbb{Z}$, and a precision $p \in \mathbb{Z}_{\ge 2}$.}

\begin{enumerate}
\setlength{\itemsep}{0pt}
\setlength{\parskip}{3pt}
\setlength{\parsep}{0pt}
\item If $z$ is not finite or if $k \ne 0$ and $0 \in z$, return indeterminate ($[\pm \infty] + [\pm \infty] i$).
\item \label{algrealstep} If $k = 0$ and $z \subset (-1/e,\infty)$, or if $k = -1$ and $z \subset (-1/e,0)$, return $W_k(z)$ computed using dedicated code for the real branches.
\item \label{algsetprec} Set the accuracy goal to $q \gets \min(p, \max(10, -\log_2 \operatorname{rad}(z) / |\operatorname{mid}(z)|))$.
\item \label{algpointtaylor} If $k = 0$ and $|\operatorname{mid}(z)| < 2^{-q/T}$, return $W_0(z)$ computed using $T$ terms of the Taylor series.
\item \label{algpointebits} Compute positive integers $b_1 \approx \log_2(|\log(z) + 2 \pi i k|)$, $b_2 \approx \log_2(b_1)$.
      If $|z|$ is near $\infty$, or near 0 and $k \ne 0$, adjust the goal to $q \gets \min(p, \max(q + b_1 - b_2, 10))$.
\item \label{algpointasymp} Let $s = 2-b_1$, $t = 2+b_2-b_1$. If $b_1 - \max(t+Ls, Mt) > q$, return $W_k(z)$ computed using the asymptotic series with $(L, M)$ terms.
\item \label{algpointpuiseux} Check if $z$ is near the branch point at $-1/e$: if
      $|ez+1| < 2^{-2q/P - 6}$, and $|k| \le 1$ (and $\operatorname{Im}(z) < 0$ if $k = 1$, or $\operatorname{Im}(z) \ge 0$ if $k = -1$)
      return $W_k(z)$ computed using $P$ terms of the Puiseux series.

\item \label{algpointunion} If $z$ contains points on both sides of a branch cut, set $z_a = \operatorname{Re}(z) + (\operatorname{Im}(z) \cap [0,\infty)) i$
and $z_b = \operatorname{Re}(z) + (-\operatorname{Im}(z) \cap [0,\infty)) i$.
Then compute $w_a = W_k(z_a)$ and $w_b = \overline{W_{-k}(z_b)}$ and return $w_a \cup w_b$.

\item Let $x+yi = \operatorname{mid}(z)$. If $x$ lies to the left of a branch point ($0$ or $-1/e$) and
    $|y| < 2^{-q} |x|$, set $z' = \operatorname{Re}(z) + [\varepsilon \pm \varepsilon] i$ where $\varepsilon = 2^{-q} |x|$
    (if $y < 0$ in this case, modify the following steps to compute
    $\overline{W_{-k}(z')}$ instead of $W_k(z')$).
    Otherwise, set $z' = z$.

\item Compute a floating-point approximation $\tilde w \approx W_k(\operatorname{mid}(z'))$ to a heuristic accuracy of $q$ bits plus a few guard bits.

\item Convert $\tilde w$ to a certified complex interval $w$ for $W_k(\operatorname{mid}(z'))$ by calling Algorithm~\ref{alg:certify}.

\item \label{algpointbound} If $z'$ is inexact, bound $|W_k'(z')| \le C$ and add $[\pm r] + [\pm r] i$ to $w$, where $r = C \operatorname{rad}(z')$. Return $w$.
\end{enumerate}
\label{alg:main}
\end{algorithm}

In step \ref{algrealstep}, we switch to separate code for real-valued
input and output (calling
the function \texttt{arb\_lambertw} which uses real \texttt{arb\_t}
interval variables).
The real version implements essentially the same algorithm
as the complex version,
but skips most branch cut related logic.

In step \ref{algsetprec}, we reduce the working precision
to save time if the input is known to less than $p$ accurate bits.
The precision is subsequently adjusted in step \ref{algpointebits},
accounting for the fact that we gain accurate
bits in the value of $W_k(z)$ from the exponent of $\operatorname{mid}(z)$ or $k$ when $|W_k(z)|$ is large.
Step \ref{algpointebits} is cheap, as it only requires inspecting
the exponents of the floating-point components of $z$ and computing bit lengths of integers.

The constants $T, L, M, P$ appearing in steps \ref{algpointtaylor}, \ref{algpointasymp}
and \ref{algpointpuiseux} are tuning
parameters to control the number of series expansion terms allowed to compute
$W$ directly instead of falling back to root-finding.
These parameters could be made precision-dependent to optimize performance,
but for most purposes small constants work well.

Step \ref{algpointunion} ensures that $z$ lies on one side of a branch cut,
splitting the evaluation of $W_k(z)$ into two subcases if necessary.
This step ensures that step \ref{algpointbound} (which bounds
the propagated error due to the uncertainty in $z$) is correct, since our bound
for $W'$ does not account for the branch cut jump discontinuity (and in any
case differentiating a jump discontinuity would give
the output $[\pm \infty] + [\pm \infty] i$ which is needlessly pessimistic).
We note that conjugation is used to get a continuous evaluation of 
$W_{k}(\operatorname{Re}(z) + (\operatorname{Im}(z) \cap (-\infty,0))i)$, in light of our convention
to work with closed intervals and make the standard branches $W_k$ continuous from above on the cut.

\begin{figure}
\begin{centering}
\includegraphics[width=0.8\textwidth]{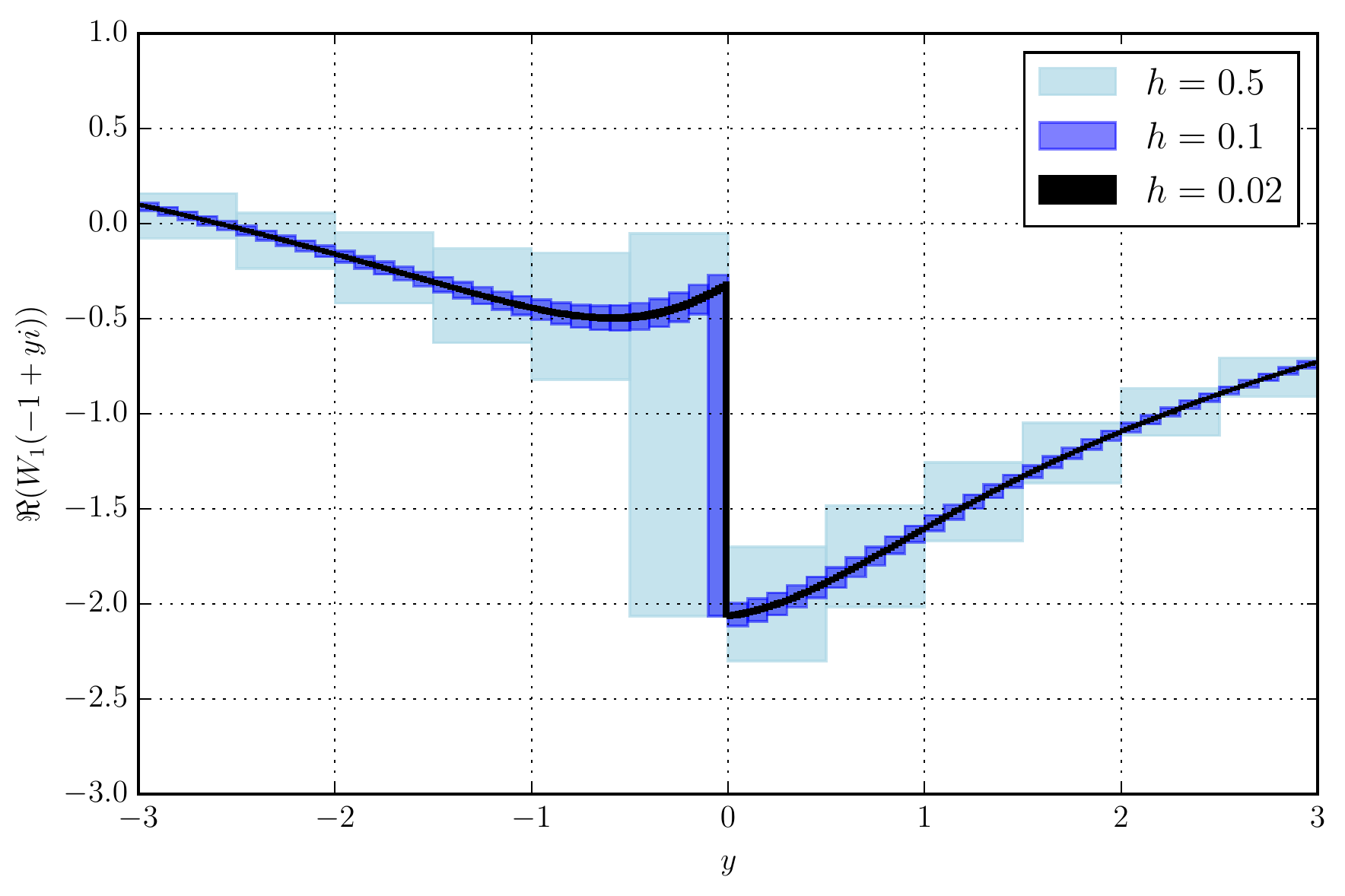}
\caption{
Plot of the real part of $W_1(z)$ on the vertical segment $z = -1 + yi, |y| \le 3$.
The boxes show the range of the output intervals given input intervals $y = [a,a+h]$.
The picture demonstrates continuity between the branch cut and the upper half plane: as intended, an imaginary part of $[-h,0]$
(or $[-h/2,h/2]$, say, though not pictured here) in the input
captures the jump discontinuity while $[0,h]$ does not.
Where continuous, the output intervals converge nicely when $h \to 0$.
}
\end{centering}
\end{figure}

We perform step \ref{algpointunion} \emph{after} checking if the asymptotic series or Puiseux
series can be used, since correctly implemented
complex logarithm and square root functions
take care of branch cuts automatically.
If $z$ needs to be split into $z_a$ and $z_b$ in step \ref{algpointunion},
then the main algorithm can
be called recursively, but the first few steps can be skipped.
However, step~\ref{algpointpuiseux} should be repeated
when $k = \pm 1$
since the Puiseux series near $-1/e$ might be valid
for $z_a$ or $z_b$
even when it is not applicable for the whole of $z$.
This ensures a finite enclosure when $z$ contains the branch point $-1/e$.

\section{Bounds and series expansions}
\label{sect:series}

We proceed to state the inequalities needed
for various error bounds in the algorithm.

\subsection{Taylor series}

Near the origin of the $k = 0$ branch, we have the Taylor series
$$W_0(z)= \sum_{n=1}^{\infty} \frac{(-n)^{n-1}}{n!} z^n.$$
Since $|n^{n-1} / n!| < e^n$, the truncation error
on stopping before the $n = T$ term is bounded by
$e^T |z|^T / (1 - e |z|)$ if $|z| < 1/e$.

\subsection{Puiseux series}
\label{sect:branchseries}

Near the branch point at $-1/e$ when $W(z) \approx -1$,
the Lambert $W$ function can be computed
by means of a Puiseux series.
This especially useful for intervals containing the point $-1/e$ itself, since we
can compute a finite enclosure whereas enclosures based on $W'(z)$ blow up.
If $\alpha = \sqrt{2(ez+1)}$, then provided that $|\alpha| < \sqrt{2}$, we have
$$W_k(z) = 
\begin{cases}
B\left(\alpha\right) & \text{if } k = 0 \\
B\left(-\alpha\right) & \text{if } k = -1 \text{ and } \operatorname{Im}(z) \ge 0 \\
B\left(-\alpha\right) & \text{if } k = +1 \text{ and } \operatorname{Im}(z) < 0
\end{cases}$$
where
\begin{equation}
\label{eq:bviaw}
B(\xi) = W\!\left(\frac{\xi^2 - 2}{2e}\right) = \sum_{n=0}^{\infty} c_n \xi^n.
\end{equation}

Note that $W_{\pm 1}$ have one-sided branch cuts on $(-\infty,0)$ and $(-1/e,0)$.
In the opposite upper and lower half planes,
there is only a single cut on $(-\infty,0)$
so the point $-1/e$ does not need to be treated specially.

In \eqref{eq:bviaw}, the appropriate branches
of $W$ are implied so that $B(\xi)$ is analytic on $|\xi| < \sqrt{2}$.
In terms of the
standard branch cuts $W_k$, that is
$$
k = \begin{cases}
0 & \text{if } -\pi/2 < \arg(\xi) \le \pi/2 \\
1 & \text{if } \pi < \arg(\xi) < -\pi/2 \\
-1 & \text{otherwise.}
\end{cases}$$

The coefficients $c_n$ are rational numbers
$$c_0 = -1, \; c_1 = 1, \; c_2 = -\frac{1}{3}, \; c_3 = \frac{11}{72}, \; c_4 = -\frac{43}{540}, \ldots$$
which can be computed recursively. From singularity analysis, $|c_n| = O((1/\sqrt{2})^n)$,
but we need an explicit numerical bound for computations.
The following estimate is not optimal, but adequate for practical use.

\begin{theorem}
\label{thm:puiseuxbound}
The coefficients in \eqref{eq:bviaw} satisfy $|c_n| < 2 \cdot (4/5)^n$, or more simply, $|c_n| \le 1$.
\end{theorem}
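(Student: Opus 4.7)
My plan is to combine Cauchy's integral formula with the Lagrange inversion formula applied in the $v$-plane, where $v = B(\xi) + 1$. Starting from $B(\xi) e^{B(\xi)} = (\xi^2 - 2)/(2e)$, substituting $v := B + 1$ and simplifying yields the implicit relation
\[ (v - 1) e^v + 1 = \xi^2/2. \]
The local inverse factors as $\xi(v) = v\, h(v)$, where $h(v) = \sqrt{2((v-1)e^v + 1)/v^2}$ is the analytic branch with $h(0) = 1$. Since $(v-1)e^v + 1$ has only a double zero at the origin (the next roots satisfy $v - 1 = W_k(-1/e)$ for $k \ne 0, -1$ and lie at $|v| \approx 8$), $\xi(v)$ is analytic and single-valued on a disk of radius at least $8$ about $v = 0$. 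Lagrange inversion then gives, for $n \ge 1$,
\[ c_n = \frac{1}{2\pi i\,n} \oint_{|v|=\rho} \frac{dv}{\xi(v)^n}, \qquad |c_n| \le \frac{\rho}{n\,(\min_{|v|=\rho}|\xi(v)|)^n}. \]

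The crux is to prove $\min_{|v|=\rho}|\xi(v)| \ge 5/4$ for a convenient $\rho$. I would take $\rho = 3$; since $|\xi(v)|^2 = 2|(v-1)e^v + 1|$, the required inequality reduces to $|(v-1)e^v + 1| \ge 25/32$ on the circle $|v| = 3$. Case analysis on $\operatorname{Re}(v)$ handles most of the circle: for $\operatorname{Re}(v) \ge 0$ we have $|(v-1)e^v| \ge (|v|-1)e^0 = 2$, so $|(v-1)e^v + 1| \ge 1$; at the candidate minimizer $v = -3$, a direct calculation gives $|(v-1)e^v + 1| = 1 - 4e^{-3}$, which exceeds $25/32$ since $e^3 > 128/7$ (and $e^3 \approx 20.09$ leaves comfortable margin). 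The main technical obstacle is the intermediate regime $\operatorname{Re}(v) \in (-3, 0)$, where neither direction of the triangle inequality is tight. To close this gap, I would parametrize $v = 3 e^{i\theta}$, write $|(v-1)e^v + 1|^2$ as an explicit trigonometric-exponential function of $\theta$, and verify by elementary calculus---or by fine interval-arithmetic evaluation on $O(1)$ subintervals of $[0, 2\pi]$---that the minimum on the circle is indeed attained at $\theta = \pi$.

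With the key estimate, Cauchy's bound becomes $|c_n| \le (3/n)(4/5)^n$, which is strictly less than $2(4/5)^n$ for every $n \ge 2$; the two remaining cases $|c_0| = 1 < 2$ and $|c_1| = 1 < 8/5$ are read off directly from the series. The simpler claim $|c_n| \le 1$ is then an immediate corollary: for $n \ge 4$ one has $|c_n| < 2(4/5)^n \le 512/625 < 1$, while for $n \in \{0,1,2,3\}$ the explicit values $|c_0| = |c_1| = 1$, $|c_2| = 1/3$, $|c_3| = 11/72$ are all at most one.
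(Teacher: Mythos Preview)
Your argument is correct, but it is a genuinely different route from the paper's. The paper applies Cauchy's coefficient estimate directly in the $\xi$-plane: one verifies by interval evaluation of $W$ that $|2+B(\xi)|<2$ on the circle $|\xi|=5/4$, and the bound $|c_n|<2(4/5)^n$ drops out immediately. You instead pass to the $v$-plane via $v=B(\xi)+1$ and use the Lagrange--B\"urmann contour formula, reducing the problem to the elementary lower bound $|(v-1)e^v+1|\ge 25/32$ on $|v|=3$. Your derivation of the implicit relation, the analyticity of $\xi(v)$ on $|v|\le 3$ (the nearest extraneous zero of $(v-1)e^v+1$ sits at $|v|\approx 8$), and the residue bound $|c_n|\le \rho/(n\,\min|\xi|^n)$ are all sound, and spot checks confirm the minimum on the circle is indeed attained at $v=-3$ with value $1-4e^{-3}\approx 0.801>25/32$. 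What your approach buys is that the verification involves only the exponential, avoiding any appeal to numerical values of $W$ itself (hence no circularity worry), and you pick up the extra factor $3/n$, giving a slightly sharper estimate for $n\ge 2$. What the paper's approach buys is brevity: it is a one-line application of Cauchy once the numerical check is done, whereas you still owe a short interval or calculus computation on the arc with $\operatorname{Re}(v)\in(-3,0)$---the margin there is only about $0.02$, so that step, while routine, cannot be skipped.
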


\begin{proof}
Numerical evaluation of $W$
shows that $|2+B(\xi)| < 2$ on the circle $|\xi| = 5/4$, so the Cauchy integral formula gives the result.
\end{proof}

The verification can of course be done using interval
arithmetic, as demonstrated in Figure~\ref{fig:bcircle}.
We stress that there is no circular dependency on Theorem~\ref{thm:puiseuxbound}
since the Puiseux series
is not used for evaluation that far from from the branch point.




%

\subsection{Asymptotic series}

The Lambert $W$ function has the asymptotic expansion
\begin{equation}
\label{eq:asymp}
W_k(z) \sim L_1 - L_2 + \sum_{l=0}^{\infty} \sum_{m=1}^{\infty} c_{l,m} \sigma^l \tau^m
\end{equation}
where
\begin{equation}
L_1 = \log(z) + 2\pi k i, \quad L_2 = \log(L_1), \quad \sigma = 1/L_1, \quad \tau = L_2 / L_1
\end{equation}
and
\begin{equation}
c_{l,m} = \frac{(-1)^m}{m!} \left[{l+m \atop l+1}\right]
\end{equation}
where $\left[{n \atop k}\right]$ denotes an (unsigned) Stirling number
of the first kind.

This expansion is valid for all $k$ when $|z| \to \infty$,
and also for $k \ne 0$ when $|z| \to 0$. In fact, \eqref{eq:asymp}
is not only an asymptotic series but (absolutely and uniformly) convergent
for all sufficiently small $|\sigma|, |\tau|$.
These properties of the expansion \eqref{eq:asymp} were proved in~\cite{corless1996lambertw}.

The asymptotic behavior
of the coefficients  $c_{l,m}$ was
studied further in \cite{kalugin2012convergence}, but
that work did not give explicit inequalities.
We will give an explicit bound for $|c_{l,m}|$,
which permits us to compute $W_k(z)$ directly from \eqref{eq:asymp}
with a bound on the error
in the relevant asymptotic regimes.

\begin{lemma}
For all $n, k \ge 0$, $$\left[{n \atop k}\right] \le \frac{2^n n!}{k!}.$$
\end{lemma}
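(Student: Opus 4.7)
The plan is to prove the bound by induction on $n$, using the standard recurrence for the unsigned Stirling numbers of the first kind,
\[
\left[{n+1 \atop k}\right] = n \left[{n \atop k}\right] + \left[{n \atop k-1}\right],
\]
together with the boundary values $\left[{0 \atop 0}\right] = 1$ and $\left[{n \atop k}\right] = 0$ whenever $k > n$ or ($n > 0$ and $k = 0$).

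The base case $n = 0$ is immediate: $\left[{0 \atop 0}\right] = 1 = 2^0 \cdot 0!/0!$ and all other entries vanish. For the inductive step, I would substitute the hypothesis into the recurrence to get
\[
\left[{n+1 \atop k}\right] \le n \cdot \frac{2^n n!}{k!} + \frac{2^n n!}{(k-1)!} = \frac{2^n n!}{k!}\bigl(n + k\bigr),
\]
and then note that $\left[{n+1 \atop k}\right]$ is nonzero only for $1 \le k \le n+1$, so $n + k \le 2n + 1 \le 2(n+1)$. This yields $\left[{n+1 \atop k}\right] \le 2^{n+1} (n+1)! / k!$, closing the induction. The cases $k = 0$ or $k > n+1$ are trivial since the left-hand side is zero.

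There is no real obstacle here; the only thing to verify carefully is that the slack in $n + k \le 2(n+1)$ is genuine whenever $\left[{n+1 \atop k}\right] > 0$, which follows from $k \le n+1$. The factor of $2^n$ in the bound is precisely what absorbs this loss at each step of the recurrence, which is why the constant $2$ (and not something smaller) appears naturally. One could alternatively derive the result by evaluating the generating identity $\sum_k \left[{n \atop k}\right] x^k = x(x+1)\cdots(x+n-1)$ at $x = 2$ and using $2(2+1)\cdots(2+n-1) = (n+1)!$, but the inductive argument above is the most direct.
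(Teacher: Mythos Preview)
Your proof is correct and follows exactly the approach the paper indicates: induction on $n$ via the recurrence $\left[{n+1 \atop k}\right] = n \left[{n \atop k}\right] + \left[{n \atop k-1}\right]$. You have simply supplied the details (base case, the bound $n+k \le 2(n+1)$ for $k \le n+1$) that the paper leaves implicit.
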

\begin{proof}
This follows by induction on the recurrence relation $$\left[{n+1 \atop k}\right] = n \left[{n \atop k}\right] + \left[{n \atop k-1}\right].$$
\end{proof}

\begin{lemma}
For all $l, m \ge 0$, $|c_{l,m}| \le 4^{l+m}$.
\end{lemma}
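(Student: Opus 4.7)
The plan is to apply the Stirling number bound from the previous lemma directly, with $n = l+m$ and $k = l+1$. This yields
\[
\left[{l+m \atop l+1}\right] \;\le\; \frac{2^{l+m}\,(l+m)!}{(l+1)!},
\]
so that
\[
|c_{l,m}| \;=\; \frac{1}{m!}\left[{l+m \atop l+1}\right] \;\le\; 2^{l+m}\cdot\frac{(l+m)!}{m!\,(l+1)!}.
\]

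The remaining task is to control the combinatorial factor. I would rewrite $\frac{(l+m)!}{m!(l+1)!} = \frac{1}{l+1}\binom{l+m}{m}$ and then use the standard bound $\binom{l+m}{m} \le 2^{l+m}$ (since the binomial coefficient is one of the $l+m+1$ terms summing to $2^{l+m}$). Since $l+1 \ge 1$, we can simply drop the $1/(l+1)$ factor, and multiplying the two $2^{l+m}$ bounds gives $|c_{l,m}| \le 4^{l+m}$, as required.

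There is no real obstacle here: the argument is a two-line chain of inequalities once the preceding lemma is in hand. The only thing to double-check is that the lemma does apply in the relevant range of indices (in particular that $l+1 \le l+m$ is not required—the Stirling number vanishes or satisfies the bound trivially when $k > n$, and otherwise the lemma's proof by induction covers the case). The bound is clearly not tight (the true growth of $c_{l,m}$ is much slower, cf.\ \cite{kalugin2012convergence}), but $4^{l+m}$ is all that is needed to make the double sum in \eqref{eq:asymp} convergent with an explicit, easily evaluated tail estimate in the regime where $|\sigma|, |\tau| < 1/4$.
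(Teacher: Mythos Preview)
Your proof is correct and essentially identical to the paper's: both apply the previous lemma to obtain $|c_{l,m}| \le 2^{l+m}\,\frac{(l+m)!}{(l+1)!\,m!}$, bound this by $2^{l+m}\binom{l+m}{m}$ (your ``drop the $1/(l+1)$'' step), and finish with $\binom{l+m}{m}\le 2^{l+m}$. The paper compresses this into a single displayed chain of inequalities without the extra commentary, but the argument is the same.
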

\begin{proof}
By the previous lemma, $$|c_{l,m}| \le \frac{2^{l+m} (l+m)!}{(l+1)! m!} \le 2^{l+m} {l+m \choose m} \le 4^{l+m}.$$
\end{proof}

We can now restate \eqref{eq:asymp} in the following
effective form.

\begin{theorem}
With $\sigma, \tau, L_1, L_2$ defined as above,
if $|\sigma| < 1/4$ and $|\tau| < 1/4$, and if $|z| > 1$ when $k = 0$, then
$$W_k(z) = L_1 - L_2 + \sum_{l=0}^{L-1} \sum_{m=1}^{M-1} c_{l,m} \sigma^l \tau^m + \varepsilon_{L,M}(z)$$
with
$$|\varepsilon_{L,M}(z)| \le \frac{4 |\tau| (4 |\sigma|)^L + (4 |\tau|)^M}{(1 - 4 |\sigma|)(1 - 4 |\tau|)}.$$
\end{theorem}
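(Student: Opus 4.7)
The plan is to treat $\varepsilon_{L,M}(z)$ as the tail of the double series \eqref{eq:asymp} and bound it termwise using the estimate $|c_{l,m}|\le 4^{l+m}$ from the preceding lemma. The first move is to decompose the complement, in $\{(l,m):l\ge 0,\,m\ge 1\}$, of the truncation rectangle $[0,L-1]\times[1,M-1]$ into two disjoint pieces: an ``upper strip'' $\{l\ge L,\,m\ge 1\}$ and a ``right strip'' $\{0\le l\le L-1,\,m\ge M\}$. This writes $\varepsilon_{L,M}(z)$ as a sum of two double series, one per strip.

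Next, I would insert $|c_{l,m}\sigma^l\tau^m|\le(4|\sigma|)^l(4|\tau|)^m$ in each piece, separate the variables, and sum the resulting one-dimensional geometric series; the hypotheses $|\sigma|,|\tau|<1/4$ keep all ratios strictly less than one. The upper strip contributes $\tfrac{(4|\sigma|)^L}{1-4|\sigma|}\cdot\tfrac{4|\tau|}{1-4|\tau|}$. For the right strip I bound the finite sum $\sum_{l=0}^{L-1}(4|\sigma|)^l$ by the full geometric series $1/(1-4|\sigma|)$, which yields $\tfrac{1}{1-4|\sigma|}\cdot\tfrac{(4|\tau|)^M}{1-4|\tau|}$. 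Adding the two contributions over a common denominator reproduces the stated bound exactly.

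The only point requiring any thought rather than pure bookkeeping is verifying that the rearrangement is legitimate: that \eqref{eq:asymp} converges absolutely to $W_k(z)$ under our hypotheses, so that splitting it into two tails is lawful. Absolute convergence of the termwise majorant $\sum_{l,m}4^{l+m}|\sigma|^l|\tau|^m$ is immediate from $|\sigma|,|\tau|<1/4$, and identification of the sum with $W_k(z)$ comes from the convergence result of~\cite{corless1996lambertw}; the side condition $|z|>1$ for $k=0$ guarantees $L_1\ne 0$ (so that $\sigma,\tau$ are defined) and ensures we are in the asymptotic regime where the series represents $W_0$ rather than being formally divergent. Beyond these remarks the proof is routine, and I foresee no further obstacle.
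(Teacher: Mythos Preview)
Your proposal is correct and follows essentially the same approach as the paper: invoke the convergence result from Corless \emph{et al.}, bound $|c_{l,m}|\le 4^{l+m}$ termwise, split the complement of the truncation rectangle into two strips, and sum the resulting geometric series. The only cosmetic difference is that the paper uses two overlapping strips (both extending the $l$-sum to all of $\mathbb{Z}_{\ge 0}$ or $\mathbb{Z}_{\ge L}$) rather than your disjoint decomposition followed by a relaxation of the finite $l$-sum, which of course yields the identical bound.
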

\begin{proof}
Under the stated conditions, the series \eqref{eq:asymp} converges to
$W_k(z)$, by the analysis in~\cite{corless1996lambertw}. We can bound the tail as
$$\left| \sum_{l=L}^{\infty} \sum_{m=M}^{\infty} c_{l,m} \sigma^l \tau^m \right| \le
\sum_{l=0}^{\infty} \sum_{m=M}^{\infty} (4 |\sigma|)^l (4 |\tau|)^m +
\sum_{l=L}^{\infty} \sum_{m=1}^{\infty} (4 |\sigma|)^l (4 |\tau|)^m.$$
Evaluating the bivariate geometric series gives the result.
\end{proof}

\subsection{Bounds for the derivative}

Finally, we give an rigorous global bound for the
magnitude of $W'$.
Since we want to compute $W$ with small \emph{relative} error,
the estimate for $|W'(z)|$
should be optimal (up to a small constant factor) anywhere,
including near singularities.
We did not obtain a single neat expression that covers $W_k(z)$ adequately
for all $k$ and $z$, so a few case distinctions are made.

$W'$ like $W$ is a multivalued function,
and whenever we fix a branch for $W$, we
fix the corresponding branch for $W'$. Exactly on a branch cut, $W'$ is therefore
finite (except at a branch point) and equal
to the directional derivative taken along the branch cut,
so we must deal with the branch cut discontinuity separately when
bounding perturbations in $W$ if $z$ crosses the cut.

The derivative of the Lambert $W$ function can be written as
$$W'(z) = \frac{1}{(1+W(z)) e^{W(z)}} = \frac{1}{z} \frac{W(z)}{1 + W(z)}$$
where a limit needs to be taken in the rightmost expression for $W_0(z)$
near $z = 0$.
The rightmost expression also shows that $W'(z) \approx 1/z$ when $|W(z)|$ is large.
Bounding $|\operatorname{Im}(W_k(z))|$ from below gives the
following.

\begin{theorem}
For $|k| \ge 2$,
$$|W_k'(z)| \le \frac{1}{|z|} \frac{(2k-2)\pi}{(2k-2)\pi-1} \le \frac{1}{|z|} \frac{2\pi}{2\pi-1} \le \frac{1.2}{|z|}.$$
Also, if $k = 1$ and $\operatorname{Im}(z) \ge 0$, or if $k = -1$ and $\operatorname{Im}(z) < 0$, then
$$|W_k'(z)| \le \frac{1}{|z|} \frac{\pi}{\pi-1} \le \frac{1.5}{|z|}.$$
\end{theorem}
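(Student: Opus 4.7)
My plan starts from the identity $W'(z) = (1/z)\,W(z)/(1+W(z))$ recorded just before the theorem; this reduces both halves of the theorem to an upper bound on the ratio $|W_k(z)|/|1+W_k(z)|$ with constants depending only on $k$.

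The workhorse I would use is the following one-liner: if $|w| \ge Y$ for some $Y > 1$, then the reverse triangle inequality gives $|1+w| \ge |w|-1 > 0$, and since $t \mapsto t/(t-1) = 1 + 1/(t-1)$ is strictly decreasing on $(1,\infty)$,
\[
\frac{|w|}{|1+w|} \;\le\; \frac{|w|}{|w|-1} \;\le\; \frac{Y}{Y-1}.
\]
Because $|w| \ge |\operatorname{Im}(w)|$, each case then boils down to a lower bound on $|\operatorname{Im}(W_k(z))|$.

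For $|k| \ge 2$, the standard description of the ranges $W_k(\mathbb{C})$ (proved in Corless et al.\ and encoded in predicate $P_1$ of Algorithm~\ref{alg:certify}) gives $|\operatorname{Im}(W_k(z))| > (2|k|-2)\pi$ directly. Setting $Y = (2|k|-2)\pi \ge 2\pi > 1$ in the gadget yields the first inequality; since $Y/(Y-1)$ decreases in $Y$, the worst case $|k|=2$ gives the numerical chain $2\pi/(2\pi-1) < 1.2$.

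The hard step is $k = \pm 1$, where the range of $W_k$ does intersect the strip $|\operatorname{Im}(w)| < \pi$, so a range-only bound fails and I have to use the half-plane hypothesis on $z$. Writing $w = W_1(z) = x + iy$ and using $z = we^w$ gives $\operatorname{Im}(z) = e^x(x \sin y + y\cos y)$. For $0 < y < \pi$ the hypothesis $\operatorname{Im}(z) \ge 0$ combined with $\sin y > 0$ forces $x \ge -y\cot y$, while predicate $P_3$ of Algorithm~\ref{alg:certify} records that the image of $W_1$ inside this strip satisfies the strict reverse inequality $x < -y\cot y$; this is a contradiction, and the borderline $y = \pi$ gives $x\sin y + y\cos y = -\pi < 0$ which is also excluded. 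Hence $\operatorname{Im}(W_1(z)) > \pi$. The case $k=-1$, $\operatorname{Im}(z) < 0$ then follows immediately from the conjugation identity $W_{-1}(z) = \overline{W_1(\bar z)}$. Applying the gadget with $Y = \pi$ completes the proof with $\pi/(\pi-1) < 1.5$.

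The main obstacle will be the $k = \pm 1$ case: extracting the correct geometric reduction of the half-plane hypothesis to an exclusion of the inner strip $|\operatorname{Im}(w)| < \pi$. Once that input is in place, both assertions of the theorem collapse to the single monotonicity gadget above.
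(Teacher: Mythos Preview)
Your proposal is correct and follows exactly the route the paper indicates in the single sentence preceding the theorem (``Bounding $|\operatorname{Im}(W_k(z))|$ from below gives the following''): you reduce everything to the monotone gadget $|w|/|1+w|\le Y/(Y-1)$ and then extract the lower bound $Y$ on $|\operatorname{Im}(W_k(z))|$ from the branch-range description, including the extra argument needed for $k=\pm 1$ under the half-plane hypothesis. The paper gives no further detail, so your write-up is in fact a faithful expansion of its proof.
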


For large $|z|$, the following two results are convenient.

\begin{theorem}If $|z| > e$, then for any $k$,
$$|W_k'(z)| \le \frac{1}{|z|} \frac{W_0(|z|)}{W_0(|z|)-1}.$$
\end{theorem}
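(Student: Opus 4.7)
The plan is to work from the identity $W_k'(z) = \frac{1}{z} \frac{W_k(z)}{1+W_k(z)}$ quoted just above the theorem and reduce everything to a bound on $|W_k(z)|$ from below.

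First I would observe that for complex $w$ with $|w| > 1$, the reverse triangle inequality gives $|1+w| \ge |w| - 1 > 0$, hence
\[
\frac{|w|}{|1+w|} \;\le\; \frac{|w|}{|w|-1} \;=\; 1 + \frac{1}{|w|-1},
\]
which is a decreasing function of $|w|$ on $(1,\infty)$. So to bound $|W_k'(z)|$ from above it is enough to bound $|W_k(z)|$ from below by some quantity that is itself strictly greater than $1$ when $|z|>e$.

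The crux is to show $|W_k(z)| \ge W_0(|z|)$ for every branch $k$. For this, I would start from the defining equation $W_k(z) e^{W_k(z)} = z$, take absolute values to get $|W_k(z)|\, e^{\operatorname{Re} W_k(z)} = |z|$, and then use $\operatorname{Re} W_k(z) \le |W_k(z)|$ to obtain
\[
|W_k(z)|\, e^{|W_k(z)|} \;\ge\; |z|.
\]
Since the real map $t \mapsto t e^t$ is strictly increasing on $[0,\infty)$ with inverse $W_0$ restricted to $[0,\infty)$, and since $|W_k(z)|\ge 0$, this inequality implies $|W_k(z)| \ge W_0(|z|)$. When $|z|>e$ we additionally have $W_0(|z|) > W_0(e) = 1$, so we are safely in the regime $|W_k(z)|>1$ required above.

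Putting the two pieces together,
\[
|W_k'(z)| \;=\; \frac{1}{|z|}\,\frac{|W_k(z)|}{|1+W_k(z)|} \;\le\; \frac{1}{|z|}\,\frac{|W_k(z)|}{|W_k(z)|-1} \;\le\; \frac{1}{|z|}\,\frac{W_0(|z|)}{W_0(|z|)-1},
\]
where the last step uses monotonicity of $t/(t-1)$ together with $|W_k(z)|\ge W_0(|z|)>1$. The only nontrivial step is the lower bound $|W_k(z)| \ge W_0(|z|)$, which is really just the observation that the real positive solution of $te^t=r$ is the smallest in modulus among all complex solutions; once that is in hand, everything else is elementary.
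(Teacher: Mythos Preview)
Your proof is correct and follows exactly the approach the paper takes: the paper's proof is the single sentence ``The inequality $|W_k(z)| \ge W_0(|z|)$ holds for all $z$ (this is easily proved from the inverse function relationship defining $W$), giving the result,'' and you have simply written out both that inequality and the remaining elementary steps in full. Nothing is missing or different in substance.
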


\begin{proof}
The inequality $|W_k(z)| \ge W_0(|z|)$ holds for all $z$ (this is easily proved from the
inverse function relationship defining $W$), giving the result.
\end{proof}

\begin{theorem}If $|z| \ge \left(\tfrac{1}{2} + (2|k|+1)\pi\right) e^{-1/2}$,
or more simply if $|z| \ge 4(|k|+1)$, then
$$|W_k'(z)| \le \frac{1}{|z|}.$$
\end{theorem}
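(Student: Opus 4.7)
The plan is to start from the factored form $W_k'(z) = \tfrac{1}{z} \cdot \tfrac{W_k(z)}{1+W_k(z)}$ stated earlier in this subsection and reduce the desired bound $|W_k'(z)| \le 1/|z|$ to the geometric condition $|W_k(z)| \le |1+W_k(z)|$. Writing $w = u+iv$, the identity $|1+w|^2 - |w|^2 = 1 + 2u$ shows this condition is equivalent to $\operatorname{Re}(W_k(z)) \ge -1/2$, so the theorem reduces to proving that $|z|$ sufficiently large keeps $\operatorname{Re}(W_k(z))$ out of the halfplane $\operatorname{Re}(w) < -1/2$.

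I would argue by contrapositive: suppose $u = \operatorname{Re}(W_k(z)) < -1/2$. The range description recalled in Section 2.1 (and encoded in the predicates $P_1$--$P_4$ of Algorithm \ref{alg:certify}) gives $|\operatorname{Im}(W_k(z))| \le V := (2|k|+1)\pi$ for every $k$; in particular $V = \pi$ when $k = 0$. From $z = we^w$ we then obtain $|z| = \sqrt{u^2+v^2}\,e^u \le g(u)$ where $g(u) := \sqrt{u^2+V^2}\,e^u$. The heart of the argument is that $g$ is strictly increasing on $(-\infty,-1/2]$: differentiation yields $g'(u) = e^u\bigl((u+\tfrac{1}{2})^2 + V^2 - \tfrac{1}{4}\bigr)/\sqrt{u^2+V^2}$, which is positive because $V \ge \pi > 1/2$. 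Therefore $|z| \le g(u) < g(-1/2) = \sqrt{1/4+V^2}\,e^{-1/2} \le (1/2+V)\,e^{-1/2}$, where the last step uses $\sqrt{a^2+b^2} \le a+b$ for $a,b \ge 0$. Contrapositively, $|z| \ge (1/2+(2|k|+1)\pi)e^{-1/2}$ forces $\operatorname{Re}(W_k(z)) \ge -1/2$, giving the first inequality.

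The simpler sufficient condition $|z| \ge 4(|k|+1)$ then follows by verifying $(1/2+(2|k|+1)\pi)e^{-1/2} \le 4(|k|+1)$ for every integer $k \ge 0$; comparing the constant and $|k|$-linear parts separately, this reduces to the numerical inequalities $(1/2+\pi)e^{-1/2} \le 4$ and $2\pi e^{-1/2} \le 4$, both implied by the easy estimate $\pi < 2\sqrt{e}$.

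The main (and essentially only) obstacle is ensuring the monotonicity window for $g$: one needs $(u+\tfrac{1}{2})^2 + V^2 > \tfrac{1}{4}$ to hold throughout $u \le -1/2$, which is automatic because even the smallest possible $V$ (namely $\pi$, occurring at $k=0$) already exceeds $1/2$. Once this observation is made, the rest is algebraic bookkeeping.
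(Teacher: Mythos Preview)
Your proof is correct and follows essentially the same contrapositive route as the paper: reduce to $\operatorname{Re}(W_k(z)) \ge -1/2$, then show that $\operatorname{Re}(W_k(z)) < -1/2$ together with the range bound $|\operatorname{Im}(W_k(z))| < (2|k|+1)\pi$ forces $|z| < (\tfrac12 + (2|k|+1)\pi)e^{-1/2}$. The only cosmetic difference is that the paper bounds $|w|$ by $|a|+V$ via the triangle inequality rather than your $\sqrt{u^2+V^2}$, and leaves the monotonicity in $a$ and the verification of the simplified threshold $4(|k|+1)$ implicit, whereas you spell both out.
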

\begin{proof}
Let $a = \operatorname{Re}(W_k(z))$.
We have $|W_k(z)/(1+W_k(z))| \le 1$ when $a \ge -1/2$.
If $a < -1/2$,
then $|z| = |W_k(z) e^{W_k(z)}| < (|a| + (2|k|+1)\pi) e^a < (\tfrac{1}{2} + (2|k|+1)\pi) e^{-1/2}$.
\end{proof}

It remains to bound $|W_k'(z)|$ for $k \in \{-1,0,1\}$ in the cases
where $z$ may be near the branch point at $-1/e$.
This can be accomplished as follows.

\begin{theorem}
\label{thm:genericderivbound}
For any $k$,
$$|W_k'(z)| \le \frac{1}{|z|} \max\left(3, \frac{1.5}{\sqrt{|e z + 1|}}\right).$$
\end{theorem}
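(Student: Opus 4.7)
The plan is to rewrite $W_k'(z) = \frac{1}{z}\cdot\frac{W_k(z)}{1+W_k(z)}$ and bound the second factor by $\max(3,\, 1.5/\sqrt{|ez+1|})$; dividing by $|z|$ then yields the stated inequality. Starting from $W/(1+W) = 1 - 1/(1+W)$, one has
$$\left|\frac{W_k(z)}{1+W_k(z)}\right| \le 1 + \frac{1}{|1+W_k(z)|},$$
and I would split into two cases according to the size of $|1+W_k(z)|$.

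Case A, $|1+W_k(z)| \ge 1/2$: immediately $|W/(1+W)| \le 1 + 2 = 3$, which accounts for the first branch of the maximum.

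Case B, $|1+W_k(z)| < 1/2$: the crux is to link $|1+W_k(z)|$ to $|ez+1|$. Writing $u = 1+W_k(z)$ and substituting into the defining equation $z = W_k(z)\,e^{W_k(z)}$, one expands
$$ez + 1 = (u-1)\,e^u + 1 = \sum_{n=2}^{\infty} \frac{n-1}{n!}\, u^n,$$
whose coefficient sum telescopes:
$$\sum_{n=2}^{\infty} \frac{n-1}{n!} = \sum_{n=2}^{\infty}\left(\frac{1}{(n-1)!} - \frac{1}{n!}\right) = 1.$$
Hence for $|u| \le 1$, the triangle inequality gives $|ez+1| \le |u|^2$, that is, $\sqrt{|ez+1|} \le |1+W_k(z)|$. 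In Case B both $|1+W_k(z)| < 1/2 \le 1$ and $\sqrt{|ez+1|} < 1/2$ hold, so
$$\left|\frac{W_k(z)}{1+W_k(z)}\right| \le 1 + \frac{1}{|1+W_k(z)|} \le 1 + \frac{1}{\sqrt{|ez+1|}} \le \frac{1.5}{\sqrt{|ez+1|}},$$
where the last step uses $1 \le \tfrac{1/2}{\sqrt{|ez+1|}}$.

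The main obstacle is spotting the identity $ez+1 = \sum_{n\ge 2}(n-1)u^n/n!$ and recognising that its coefficient sum telescopes exactly to $1$; this delivers the sharp comparison $|ez+1| \le |1+W_k(z)|^2$ and thereby the correct constant $1.5$, whereas bounds derived via the Puiseux expansion of $B(\xi)$ give a constant that is too large unless $|\xi|$ is taken quite small. Because the expansion of $(u-1)e^u + 1$ is purely algebraic and uses only $W_k e^{W_k} = z$, the argument is uniform in the branch index $k$, and the edge cases $z = 0$ (for $k=0$) and $z = -1/e$ are handled by the bound's $1/|z|$ and $1/\sqrt{|ez+1|}$ blowing up to match the blow-up of $W_k'$.
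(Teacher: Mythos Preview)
Your proof is correct and follows essentially the same route as the paper: split on whether $|1+W_k(z)| \ge 1/2$, and in the small case use the Taylor expansion of $(u-1)e^u+1$ with $u=1+W_k(z)$ to obtain $|ez+1|\le |u|^2$, then combine with $|u|\le 1/2$ to reach $1.5/\sqrt{|ez+1|}$. The only cosmetic difference is that the paper bounds $(1+|\varepsilon|)/|\varepsilon|\le 1.5/|\varepsilon|\le 1.5/\sqrt{|ez+1|}$ directly, whereas you pass through $1+1/\sqrt{|ez+1|}$; both are fine.
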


\begin{proof}
If $|W(z) + 1| \ge 1/2$, then $|W(z)/(W(z)+1)| \le 3$.
Now consider the case $W(z) + 1 = \varepsilon$ for some $|\varepsilon| \le 1/2$.
Then we must have $|e z + 1| \le |\varepsilon|^2$, due to the Taylor expansion
$$(-1+\varepsilon) e^{-1+\varepsilon} + e^{-1} = \frac{1}{e}
\left(\frac{\varepsilon^2}{2} + \frac{\varepsilon^2}{3} + \frac{\varepsilon^2}{8} + \ldots\right).$$
This implies that
$$\left| \frac{W(z)}{W(z) + 1}\right| = \frac{|\varepsilon - 1|}{|\varepsilon|}
\le \frac{1+|\varepsilon|}{|\varepsilon|} \le \frac{1.5}{\sqrt{|ez+1|}}.$$
\end{proof}

Theorem~\ref{thm:genericderivbound} can be used practice,
provided that we use a different bound when $k = 0$ and $z \approx 0$
(also, when $z \approx -1/e$ and $W_k(z) \not \approx -1$).
However, it is worth making a few
case distinctions and slightly complicating the formulas
to tighten the error propagation for $k = -1,0,1$.
For these branches, we implement the following inequalities.

\begin{theorem}
Let $t = |ez+1|$.
\begin{enumerate}
\item If $|z| \le 64$, then
$$|W_0'(z)| \le \frac{2.25}{\sqrt{t (1+t)}}.$$
\item If $|z| \ge 1$, then
$$|W_0'(z)| \le \frac{1}{|z|}.$$
\item If $\operatorname{Re}(z) \ge 0$, or if $\operatorname{Im}(z) < 0$ when $k = -1$
(respectively $\operatorname{Im}(z) \ge 0$ when $k = 1$), then
$$|W_{\pm 1}(z)| \le \frac{1}{|z|} \left(1 + \frac{1}{4+|z|^2}\right).$$
\item For all $z$,
$$|W_{\pm 1}(z)| \le \frac{1}{|z|} \left(1 + \frac{23}{32} \frac{1}{\sqrt{t}}\right).$$
\end{enumerate}
\end{theorem}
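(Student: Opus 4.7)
The backbone of all four bounds is the identity
$$W_k'(z) = \frac{1}{z}\frac{W_k(z)}{1+W_k(z)},$$
so the plan is to reduce each part to a bound on $|W_k(z)/(1+W_k(z))|$ in the relevant region. Two tools will recur throughout: (a) the Taylor expansion
$$(-1+\varepsilon)e^{-1+\varepsilon}+e^{-1}=\frac{1}{e}\!\left(\frac{\varepsilon^2}{2}+\frac{\varepsilon^3}{3}+\cdots\right)$$
from the proof of Theorem~\ref{thm:genericderivbound}, which yields $|1+W|\ge c\sqrt{t}$ whenever $W$ is close to $-1$, and (b) the image characterization of each branch by the curves $w=-\eta\cot\eta+\eta i$ used in Algorithm~\ref{alg:certify}.

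For part (1), I would split $\{|z|\le 64\}$ at the threshold $|1+W_0|=1/2$. When $|1+W_0|\ge 1/2$, the factor $|W_0/(1+W_0)|$ is bounded by a constant, $t$ stays bounded above by $1+64e$, and calibrating $2.25$ to cover this regime is routine. When $|1+W_0|<1/2$, tool~(a) gives $|1+W_0|\ge c\sqrt{t}$ and keeps $|z|$ bounded away from $0$, so $(1/|z|)|W_0/(1+W_0)|$ is dominated by a constant multiple of $1/\sqrt{t}$; the extra factor $1/\sqrt{1+t}$ is absorbed since in that regime $|z|$ and $\sqrt{1+t}$ are comparable.

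For part (2), it suffices to prove $\operatorname{Re}(W_0(z))\ge -1/2$ when $|z|\ge 1$, which immediately gives $|W_0/(1+W_0)|\le 1$. Writing $W_0(z)=u+vi$ so that $|z|^2=(u^2+v^2)e^{2u}$, the left boundary of the image of $W_0$ is $u=-\eta\cot\eta$, $v=\eta$, and the equation $\eta\cot\eta=1/2$ has a unique positive solution $\eta_0<1.17$. Hence on the vertical line $u=-1/2$ inside the image we have $|v|<\eta_0$, which forces $|z|^2\le (1/4+\eta_0^2)e^{-1}<1$; a continuity argument then precludes $\operatorname{Re}(W_0(z))<-1/2$ whenever $|z|\ge 1$.

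For parts (3) and (4), I would use that $W_{\pm 1}$ lives in strips whose imaginary parts are bounded away from $0$. In the favorable half-plane of (3), the image constraints force $W_{\pm 1}(z)$ away from $-1$; examining
$$\left|\frac{W_{\pm 1}}{1+W_{\pm 1}}\right|^{\!2} = 1 - \frac{1+2u}{(1+u)^2+v^2}$$
with $u+vi=W_{\pm 1}(z)$ then lets me derive the $1+1/(4+|z|^2)$ factor by substituting $|z|^2=(u^2+v^2)e^{2u}$ and splitting on the sign of $1+2u$. Part (4) drops the half-plane hypothesis: on the bad side, where $W_{\pm 1}$ can approach $-1$, tool~(a) gives $|1+W_{\pm 1}|\ge c\sqrt{t}$, and the constant $23/32$ is chosen to make the resulting bound continuous with part (3) at the splicing boundary. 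The main obstacle, and the step I expect to spend most of the effort on, is tightening the numerical constants $2.25$, $1/(4+|z|^2)$, and $23/32$ so that the Taylor-based lower bounds on $|1+W|$ and the image-based bounds agree on the boundaries between the regions where they apply.
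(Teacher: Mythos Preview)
Your plan takes a genuinely different route from the paper. The paper's proof is almost purely computational: it asserts that all four inequalities can be checked by interval arithmetic on a bounded region (since $1/|z|$ already dominates for large $|z|$), bootstrapping the evaluation of $W$ from Theorem~\ref{thm:genericderivbound} away from the branch points and from the Puiseux bound of Theorem~\ref{thm:puiseuxbound} near $-1/e$, and it explicitly omits the numerical details. Your analytic case analysis is more illuminating---in particular, your argument for part~(2) via the image boundary $u=-v\cot v$ is a complete, clean proof that $\operatorname{Re}(W_0(z))\ge -\tfrac12$ on $|z|\ge 1$, which the paper does not spell out. For parts~(1), (3), (4) your decomposition is the right shape, but be aware that the constants are tight: in part~(1), the asymptotic $|W_0'|\sim (e/\sqrt{2})\,t^{-1/2}\approx 1.92\,t^{-1/2}$ as $t\to 0$ leaves little room against $2.25$ once the factor $e^{-\operatorname{Re}\varepsilon}$ and the higher terms of $ez+1=\varepsilon^2/2+\varepsilon^3/3+\cdots$ are included (you will want the image constraint $\operatorname{Re}\varepsilon>0$ near $w=-1$, not merely $|\varepsilon|\le 1/2$), and on the cut near $|z|=64$ the two sides again nearly coincide. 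So the obstacle you flag is real; your programme is feasible but may in the end require a numerical sweep of the same kind the paper performs, with the payoff that it explains where the specific constants come from.
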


\begin{proof}
The inequalities can be verified by interval computations on a bounded
region (since $1/|z|$ is an upper bound for sufficiently large $|z|$)
excluding the neighborhoods of the branch points.
These computations can be done by
bootstrapping from Theorem~\ref{thm:genericderivbound}.
Close to $-1/e$, Theorem~\ref{thm:puiseuxbound} applies, and
an argument similar to that in Theorem~\ref{thm:genericderivbound}
can be used close to 0. (We omit the straightforward
but lengthy numerical details.)
\end{proof}

It is clearly possible to make the bounds sharper,
not least by adding more case distinctions, but these formulas
are sufficient for our purposes, easy to implement, and cheap to evaluate.
The implementation requires only
the extraction of lower or upper bounds of intervals
and unsigned floating-point operations with directed rounding
(assuming that $ez+1$ has been computed using interval arithmetic).

\section{Alternative branch cuts}

\label{sect:altbranch}

If the input $z$ is an exact floating-point number,
then we can always pinpoint its location in relation to the
standard branch cuts of $W$.
However, if the input is generated by an interval computation, it might look
like $z = -10 + [\pm \varepsilon] i$ where
the sign of $\operatorname{Im}(z)$ is ambiguous.
If we want to compute solutions of $w e^w = z$ in this case,
the standard branches $W_k$ do not work well because the jump
discontinuity on the branch cut prevents the output
intervals from converging when $\varepsilon \to 0$.

Likewise, when evaluating an integral or a solution of a differential
equation involving $W$, say
$\int_a^b f(z, W(g(z))) dz$, we might need to consider paths that would cross
the standard branch cuts.
We already saw an example with the application of
the Cauchy integral formula to the Puiseux series
coefficients in Section~\ref{sect:branchseries}.

It is instructive to consider the treatment of square roots and logarithms,
where the branch cut can be moved from $(-\infty,0)$ to $(0,\infty)$ quite easily.
The solutions of $w^2 = z$ are given by $w = \sqrt{z}, -\sqrt{z}$, but
switching to $w = i \sqrt{-z}, -i \sqrt{-z}$ gives continuity
along paths crossing the negative real axis.
Similarly, for the solutions of $e^w = z$, we can switch from
$w = \log(z) + 2k \pi i$ to $w = \log(-z) + (2k+1) \pi i$.

The Lambert $W$ function lacks a
functional equation that simply would allow us to negate~$z$. Instead, we
define a set of alternative branches for $W$ as follows:

\begin{itemize}
\item $W_{\mathrm{left}|k}(z)$ joins $W_k(z)$ for $z$ in the upper half plane with $W_{k+1}(z)$ in the lower half plane,
providing continuity to the left of the branch point at $0$ (when $k \notin \{-1,0\}$) or $-1/e$ (when $k \in \{-1,0\}$).
The branch cuts of this function thus extend from $0$ or $-1/e$ to $+\infty$.

\item $W_{\mathrm{middle}}(z)$ joins $W_{-1}(z)$ in the upper half plane with $W_1(z)$ in the lower half plane,
with continuity through the central segment $(-1/e,0)$.
This function extends the real analytic function $W_{-1}(x), x \in (-1/e,0)$
to a complex analytic function on $z \in \mathbb{C} \setminus (-\infty,-1/e] \cup [0,\infty)$,
unlike the standard branch $W_{-1}(z)$ where the real-valued segment
lies precisely on the branch cut.
\end{itemize}

We follow the principle of counter-clockwise continuity
to define the values of these alternative branches
on their branch cuts (absent use of signed zero).

In the Arb implementation, the user can select the respective modified
branch cuts by passing a special value in the \emph{flags} field
instead of the default value 0, namely
\begin{verbatim}
acb_lambertw(res, z, k, ACB_LAMBERTW_LEFT, prec)
acb_lambertw(res, z, k, ACB_LAMBERTW_MIDDLE, prec)
\end{verbatim}
where $k = -1$ should be set in the second case.

We implement the alternative branch cuts by splitting
the input into
$z_a = \operatorname{Re}(z) + (\operatorname{Im}(z) \cap [0,\infty)) i$
and $z_b = \operatorname{Re}(z) + (-\operatorname{Im}(z) \cap [0,\infty)) i$.
If the standard branches to be taken below and above the cut
have index $k$ and $k'$ respectively, then
we compute $W(z)$ as $W_k(z_a) \cup \overline{W_{-k'}(z_b)}$.
Conjugation is used to get a continuous evaluation of 
$W_{k'}(\operatorname{Re}(z) + (\operatorname{Im}(z) \cap (-\infty,0))i)$, in light of our convention
to work with closed intervals and make the standard branches $W_k$ continuous from above on the cut.

\begin{figure}
\begin{centering}
\includegraphics[width=0.8\textwidth]{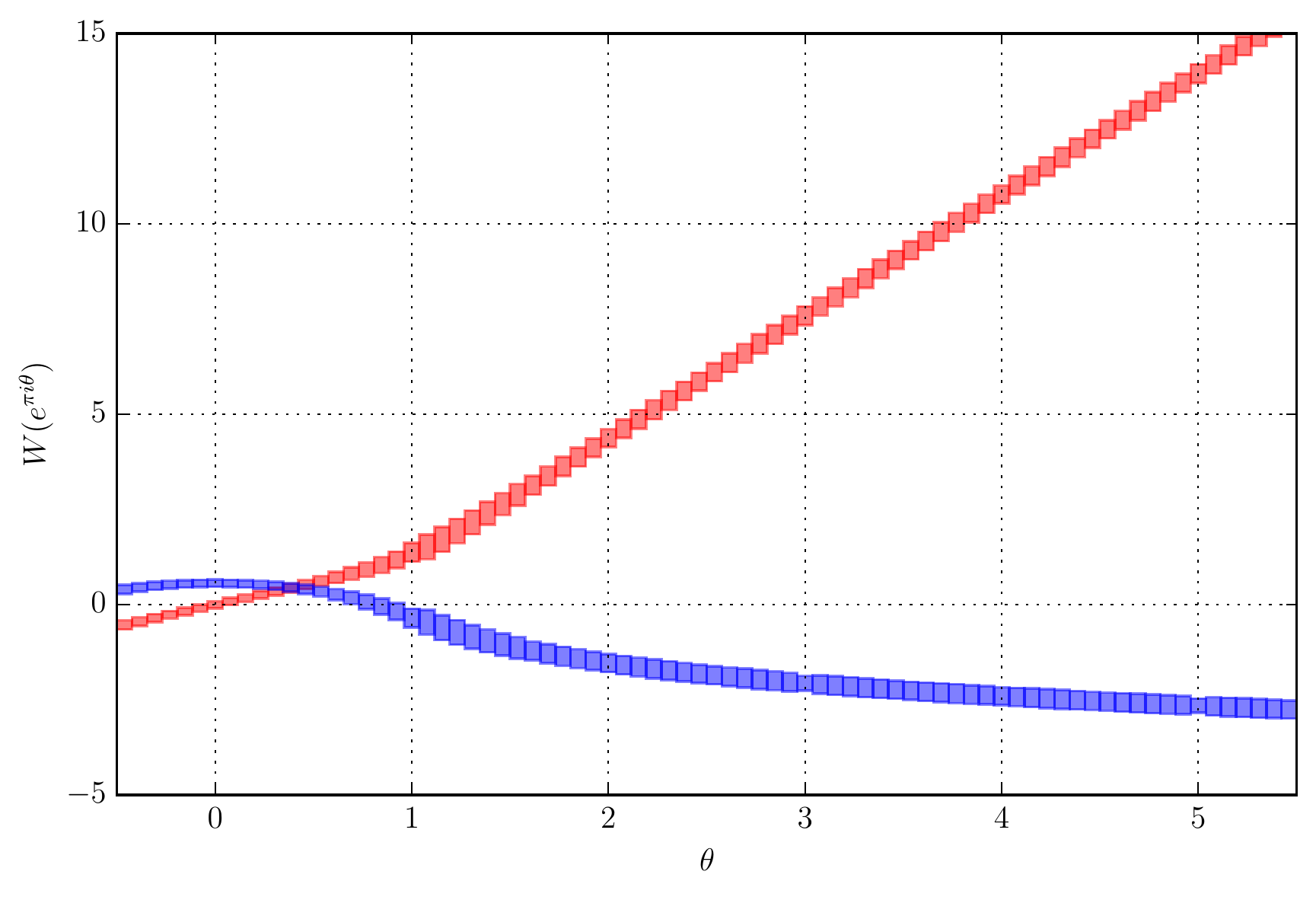}
\caption{
Plot of the real part (even function) and imaginary part (odd function) of $W(e^{\pi i \theta})$
with continuous analytic continuation
on the Riemann surface of $W$. The branch used for evaluation is
$W_0$ on $\theta \in [-0.5,0.5]$, $W_{\mathrm{left}|0}$ on $[0.5,1.5]$,
$W_1$ on $[1.5,2.5]$, $W_{\mathrm{left}|1}$ on $[2.5,3.5]$,
$W_2$ on $[3.5,4.5]$, and $W_{\mathrm{left}|2}$ on $[4.5,5.5]$.
Continuity is preserved whenever $\theta$ crosses an integer, that
is, when $z = e^{\pi i \theta}$ crosses the real axis.
The input intervals for $\theta$ have width $1/13$.
}
\end{centering}
\end{figure}

\begin{figure}
\begin{centering}
\includegraphics[width=0.8\textwidth]{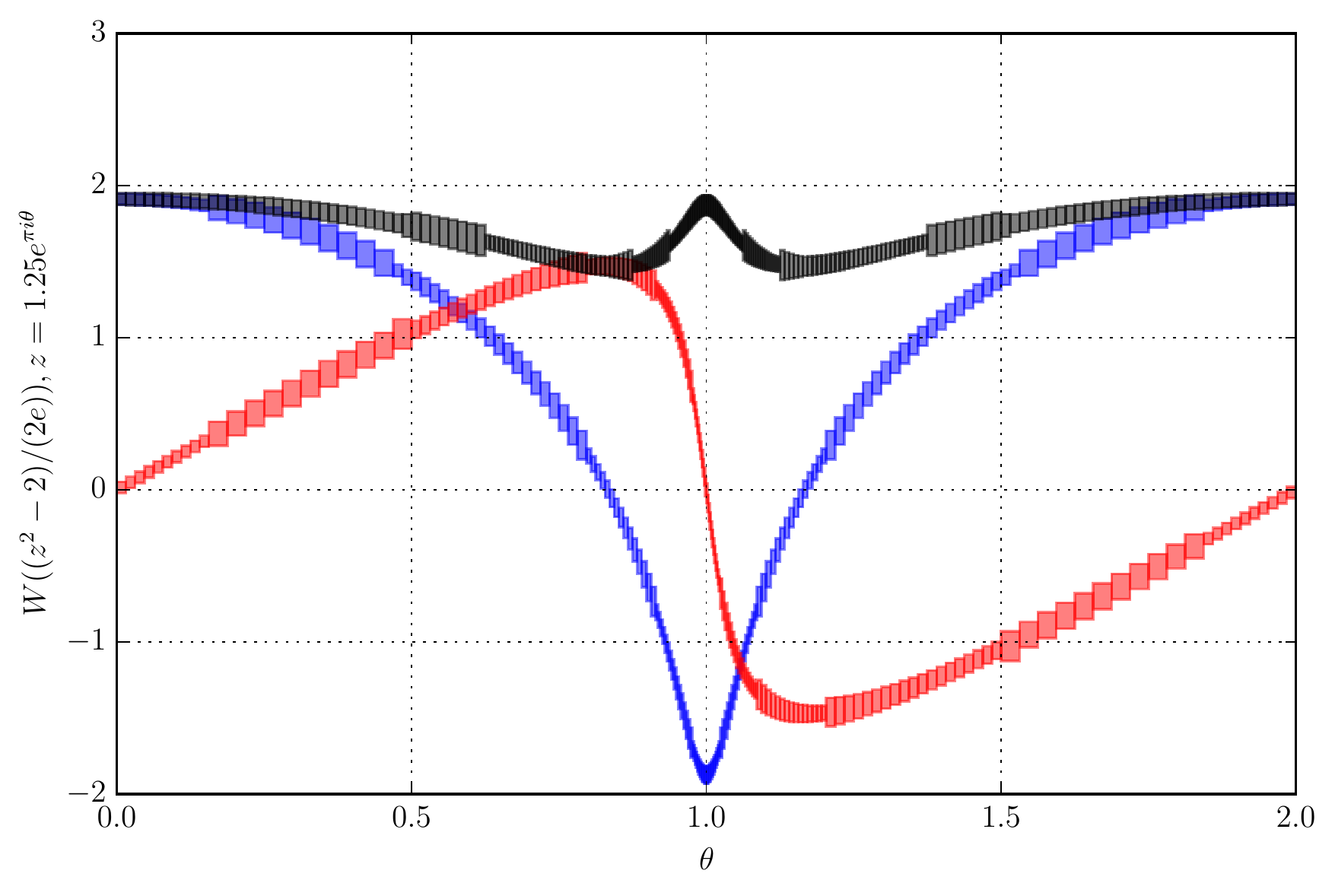}
\caption{
Plot of the real and imaginary part and the absolute
value (black) of $2 + W((z^2-2)/(2e)), z = 1.25 e^{\pi i \theta}$
with continuous analytic continuation.
The function argument $(z^2-2)/(2e)$ traces two loops around the branch point at $-1/e$,
passing through the branches $0, +1, -1$, and back to $0$.
From left to right,
the branch used for evaluation
cycles through $W_0, W_{\mathrm{left}|0}, W_1, W_{\mathrm{middle}}, W_{-1}, W_{\mathrm{left}|-1}, W_0$.
Input intervals have been subdivided adaptively to show the absolute value bound.
}
\label{fig:bcircle}
\end{centering}
\end{figure}

We observe that for $W_{\mathrm{middle}}(z)$ the Puiseux expansion at $-1/e$ is valid
in all directions, as is the asymptotic expansion
at $0$ with $L_1 = \log(-z)$ and $L_2 = \log(-L_1)$.
Further, $W_{\mathrm{left}|k}(z)$ is given by the
asymptotic expansion with $L_1 = \log(-z) + (2k+1) \pi i, L_2 = \log(L_1)$ when $|z| \to \infty$.
These formulas could be used directly instead of
case splitting where applicable.

\section{Testing and benchmarks}

\begin{table}[h!]
\begin{centering}
\begin{tabular}{ c | r r r r }
$z$ & 10 & 100 & 1000 & 10000 \\ \hline
$10$ & 3.36 & 7.12 & 1.60 & 1.50 \\
$10^{10}$ & 3.64 & 6.92 & 1.65 & 1.53 \\
$10^{10^{20}}$ & 3.46 & 8.39 & 1.91 & 1.67 \\
$10i$ & 13.20 & 8.68 & 4.71 & 3.27 \\
$-10^{10^{20}}$ & 3.69 & 29.75 & 7.53 & 4.59 \\
$-1/e+10^{-100}$ & 4.57 & 2.33 & 2.23 & 1.97 \\
$-1/e-10^{-100}$ & 4.43 & 2.36 & 7.08 & 2.89
\end{tabular}
\caption{Time to compute $w = W_0(z)$, relative to the time
to compute $e^w$, at a precision of 10, 100, 1000 and 10000 decimal digits.}
\label{tab:timings}
\end{centering}
\end{table}

We have tested the implementation in Arb in various ways,
most importantly to verify that correct inclusions
are being computed, but also to make sure
that output intervals are reasonably tight.

The automatic unit test included with the library
generates overlapping random input intervals $z_1, z_2$ (sometimes placed very close to $-1/e$),
computes $w_1 = W_k(z_1)$ and $w_2 = W_k(z_2)$
at different levels of precision (sometimes directly invoking the
asymptotic expansion with a random number of terms instead of calling the main Lambert $W$ function implementation),
checks that the intervals $w_1$ and $w_2$ overlap,
and also checks that $w_1 e^{w_1}$ contains $z_1$.
The conjugate symmetry is also tested.
These checks give a strong test of correctness.

We have also done separate tests
to verify that the error bounds converge for exact floating-point input
when the precision is increased,
and further ad hoc tests have been done 
to test a variety of easy and difficult cases at different precisions.

At low precision, the absolute time to evaluate $W$ for
a ``normal'' input $z$ is about $10^{-6}$ seconds when $W$ is real
and $10^{-5}$ seconds when $W$ is complex (on an Intel i5-4300U CPU).
For instance, creating a 1000 by 1000 pixel domain coloring
plot of $W_0(z)$ on $[-5,5] + [-5,5]i$ takes 12 seconds.

Table \ref{tab:timings} shows normalized timings for \texttt{acb\_lambertw}.
The higher relative overhead when $W$ is complex 
mainly results from less optimized precision handling in the
floating-point code (which could be improved in a future version),
together with some extra overhead for the branch test.

We show the output (converted to decimal intervals using \texttt{arb\_printn})
for a few of the test cases in the benchmark.
For $z = 10$, the following results are computed at
the respective levels of precision:
\begin{small}
\begin{verbatim}
[1.745528003 +/- 3.82e-10]
[1.7455280027{...79 digits...}0778883075 +/- 4.71e-100]
[1.7455280027{...979 digits...}5792011195 +/- 1.97e-1000]
[1.7455280027{...9979 digits...}9321568319 +/- 2.85e-10000]
\end{verbatim}
\end{small}

For $z = 10^{10^{20}}$, we get:
\begin{small}
\begin{verbatim}
[2.302585093e+20 +/- 3.17e+10]
[230258509299404568354.9134111633{...59 digits...}5760752900 +/- 6.06e-80]
[230258509299404568354.9134111633{...959 digits...}8346041370 +/- 3.55e-980]
[230258509299404568354.9134111633{...9959 digits...}2380817535 +/- 6.35e-9980]
\end{verbatim}
\end{small}

For $z = -1/e+10^{-100}$, the input interval overlaps with the branch
point at 10 and 100 digits, showing a potential small imaginary part in the output,
but at higher precision the imaginary part disappears:
\begin{small}
\begin{verbatim}
[-1.000 +/- 3.18e-5] + [+/- 2.79e-5]i
[-1.0000000000{...28 digits...}0000000000 +/- 3.81e-50] + [+/- 2.76e-50]i
[-0.9999999999{...929 digits...}9899904389 +/- 2.99e-950]
[-0.9999999999{...9929 digits...}9452369126 +/- 5.45e-9950]
\end{verbatim}
\end{small}

\section{Automatic differentiation}

Finally, we consider the computation of derivatives $W^{(n)}$,
or more generally $(W \circ f)^{(n)}$ for an arbitrary function $f$.
That is, given a power series $f \in \mathbb{C}[[x]]$, we want
to compute the power series $W(f)$ truncated to length $n + 1$.

The higher derivatives of $W$ can be
calculated using recurrence relations as discussed in~\cite{corless1996lambertw},
but it is more efficient to use formal Newton iteration
in the ring $\mathbb{C}[[x]]$ to solve the equation $w e^{w} = f$.
That is, given a power series $w_j$ correct to $n$ terms, we compute
$$w_{j+1}=w_j-\frac{w_j e^{w_j}-f}{e^{w_j}+w_j e^{w_j}}$$
which is correct to $2n$ terms.

Indeed, this approach allows us to compute the first $n$ derivatives
of $W$ or $W \circ f$ (when the first $n$ derivatives of $f$ are given)
in $O(\mathsf{M}(n))$ operations
where $\mathsf{M}(n)$ is the complexity of polynomial multiplication.
With FFT based multiplication, we have $\mathsf{M}(n) = O(n \log n)$.

This method is implemented by the Arb functions
\texttt{arb\_poly\_lambertw\_series} (for real polynomials)
and \texttt{acb\_poly\_lambertw\_series} (for complex polynomials).

Since the low $n$ coefficients of $w_{j+1}$ and $w_j$ are identical
mathematically, we simply copy these coefficients instead of
performing the full subtraction (avoiding needless inflation of the enclosures).
A further important optimization in this algorithm is to
save the constant term $e_0 = [x^0] e^w$ so that $e^{w_j}$ can be computed
as $e_0 e^{w_j - [x^0] w_j}$. This avoids a transcendental
function evaluation, which is expensive
and moreover can be ill-conditioned, leading to greatly inflated
enclosures.
The performance could be improved further by a constant factor
by saving the partial Newton iterations done internally for power series
division and exponentials.

Empirically, the Newton iteration scheme is reasonably numerically stable,
permitting the evaluation of high order derivatives with
modest extra precision even in interval arithmetic.
For example, computing 10000 terms in the series
expansion of $h(x) = W_0(e^{1+x})$
at 256-bit precision takes 2.8 seconds, giving $[x^{10000}] h(x)$ as
\begin{small}
\begin{verbatim}
[-6.02283194399026390e-5717 +/- 5.56e-5735].
\end{verbatim}
\end{small}


\section{Discussion}

A number of improvements could be pursued in future work.

The algorithm presented here is correct in the sense that
it computes a validated enclosure for $W_k(z)$, absent any bugs
in the code. It is also easy to see that the enclosures converge
when the input intervals converge and the precision is increased
accordingly (as long as a branch cut is not crossed),
under the assumption that the floating-point
approximation is computed accurately.
However, we have made no attempt to prove that the floating-point
approximation is computed accurately beyond 
the usual heuristic reasoning and experimental testing.

Although the focus is on interval arithmetic,
we note that applying Ziv's strategy~\cite{ziv1991fast}
allows us to compute
floating-point approximations of $W_k(z)$ with certified correct rounding.
This requires only a simple wrapper around the interval implementation
without the need for separate analysis of floating-point rounding errors.
A rigorous floating-point error analysis for computing the Lambert~$W$
function without the use of interval arithmetic seems feasible,
certainly for real variables but probably also for complex variables.

We use a first order bound based on $|W'(z)|$ for error propagation
when $z$ is inexact.
For wide $z$, more accurate bounds could be achieved using higher-order
estimates. Simple and tight bounds for $|W^{(n)}(z)|$ for small $n$
would be a useful addition.

For very wide intervals $z$, optimal enclosures
could be determined by evaluating $W$ at two or more points to find
the extreme values.
This is most easily done in the real case,
but suitable monotonicity conditions could be determined for complex variables as well.

The implementation in Arb is designed for arbitrary precision.
For low precision, the main approximation is usually computed using
\texttt{double} arithmetic, but the certification uses arbitrary-precision arithmetic
which consumes the bulk of the time.
Using validated \texttt{double} or double-double arithmetic
for the certification would be significantly faster.

\bibliographystyle{plain}
\bibliography{references}

\end{document}